% 2ndOrder Models 
% 05/2017
\documentclass[12pt]{amsart}
\usepackage[top=1.2in, left=1.2in, right=1.2in, bottom=1.2in]{geometry}
\usepackage{lmodern}
\usepackage[T1]{fontenc}
\usepackage{microtype}
\usepackage{ucs}
\usepackage[utf8x]{inputenc}
\usepackage{amssymb,amsfonts,amsmath,mathrsfs,color,amsthm,latexsym}
\usepackage[all]{xy}
%\usepackagelatin1]{inputenc}
\usepackage[colorlinks=true]{hyperref}
\usepackage{graphicx}
\usepackage{rotating}
\usepackage{array}

\usepackage{float}
\usepackage{xspace}

\usepackage[english]{babel}

\usepackage{units}

\usepackage{framed}
\usepackage{stmaryrd}

\newcommand{\norm}[1]{\left\Vert#1\right\Vert}

\newtheorem{theorem}{Theorem}[section]

\newtheorem{remark}[theorem]{Remark}
\newtheorem{proposition}[theorem]{Proposition}

\parskip 3pt

\begin{document}
\setcounter{tocdepth}{1}

\title{Second order stochastic differential models for financial markets}

\author{Nguyen Tien Zung}
\address{School of Mathematics, Shanghai Jiao Tong University 
(visiting professor), 
800 Dongchuan Road, Minhang District, Shanghai, 200240 China
and Institut de Math\'ematiques de Toulouse, UMR5219, 
Universit\'e Paul Sabatier, 118 route de Narbonne,
31062 Toulouse, France, tienzung@math.univ-toulouse.fr}

\begin{abstract}
Using agent-based modelling, empirical evidence and physical ideas, such as
the energy function and the fact that the phase space must have twice the dimension 
of the configuration space, we argue that the stochastic differential equations
which describe the motion of financial prices with respect to real world probability 
measures should be of second order (and non-Markovian), instead of first order models
à la Bachelier--Samuelson. Our theoretical result in
stochastic dynamical systems shows that one cannot correctly reduce second
order models to first order models  by simply forgetting about momenta. 
We propose some simple second order models, including a stochastic
constrained n-oscillator, which can explain many market phenomena, 
such as boom-bust cycles, stochastic quasi-periodic behavior, 
and ``hot money'' going from one market sector to another. 
\end{abstract}

\maketitle

\tableofcontents

\section{Introduction}

The main purpose of this paper is to suggest that stochastic differential equations 
and stochastic dynamical systems (SDE's and SDS's)
with respect to ``real world'' probability measure, 
which model financial price movements should be of second order, 
instead of first order Markovian models
à la Bachelier--Samuelson \cite{Bachelier1900,Samuelson_Random1965}
based on Fama's efficient market hypothesis \cite{Fama-EMH1970}.

Of course, the Black--Scholes--Merton option pricing formula \cite{BS1973,Merton_Option1973}, 
which is derived from the first order SDE
\begin{equation} \label{eqn:BS}
\dfrac{dS}{S} 
= \mu dt + \sigma dB_t,
\end{equation}
where $S$ is the price of the underlying asset, $\mu$
is the drift parameter, $\sigma$ is the volatility coefficient, and $B_t$ is a Wiener
process, is a valid formula, modulo some corrections due jumps and fat tails \cite{TankovCont_Jump2003}. However, Equation \eqref{eqn:BS}  itself
is valid only with respect to a risk-neutral measure and not with respect to the
real world probability measure, and it fails to explain many phenomena in the market, 
such as boom-bust cycles. For investing and risk management, one needs more realistic
models than this equation. 

In physics, most important equations of motion are of second oder. 
(The phase space must have twice the dimension of the configuration space, 
e.g., when the configuration space is a manifold $N$ 
then its natural phase space is the tangent or cotangent bundle of $N$).
By analogy, we come to the idea that  stock prices should also be governed by
second order differential equations, with stochastic terms due to noises. In fact, this idea is not new, even though we arrived at it by ourselves: it was already used by 
Cont and Bouchaud \cite{BouchaudCont_LangevinCrash1998}, 
who found it by analogy with 
the classical Langevin's equation, and who used it in their explanation of 
market crashes. Cont and Bouchaud \cite{BouchaudCont_LangevinCrash1998} also said that 
J. Doyne Farmer, who is a pioneer in agent-based modelling of complex systems \cite{Farmer_AgentBased2001,Farmer_Challenge2011-2014}, 
already showed second order differential equations for stock prices in a seminar talk in 1997
in Paris 7, though no written text was publicly available. 

Our paper may be viewed as a further development in the direction
of these ideas of second order SDS and agent-based models. In particular, we contribute the following elements to the theory:

$\bullet$ The notion of \textit{market energy}, in analogy with physical energy, which governs the equations of 
motion of stock prices, and which is responsible for many visible
market phenomena, such as boom-bust cycles and persistence of
volatility. 

$\bullet$ An \textit{agent-based} construction of this market energy, which can be decomposed into many components: kinetic market energy, potential market energy, thermodynamic market energy, etc. 
Each type of market energy corresponds to some kinds of agent behavior and strategy, such as contrarian investing and portfolio rebalancing, momentum players and hedging strategies, etc. 

$\bullet$ A \textit{no-go theorem} in the theory of reduction of stochastic 
dynamical systems, which implies in particular that first order stochastic 
models are \textit{incorrect} reductions of second order models in general.
To get better estimates and predictions in investing and risk managements, 
one needs second  order models. 
(One may guess that many trading shops are actually using them).

$\bullet$ We also discuss some simple second-order stochastic models,  
such as the damped stochastic harmonic oscillator and 
the \textit{constrained $n$-oscillator} model, which are \textit{integrable} 
in the sense of \cite{ZungThien_Stochastic2015}
(read: easy to compute and simulate), and which can 
already capture a lot of features of real-world financial markets, 
e.g., the fact that stocks in the same sector often move together, 
and that ``hot money'' can jump from one stock or sector  to another  
(\textit{market anergy transfer} among the sectors).

Our idea behind the notion of market energy is very simple. In Hamiltonian dynamics, the equation of motion of a  conservative system is written as 
(see, e.g., \cite{Arnold-Mechanics1978})
\begin{equation} \label{eqn:Hamilton}
\dot{x}_i = \dfrac{\partial H}{\partial p_i} \; ; \quad
\dot{p}_i = - \dfrac{\partial H}{\partial x_i},
\end{equation}
where $(x_i)$ are the coordinates on the configuration space (i.e, spatial variables), $(p_i)$ are called the momenta and are the dual coordinates 
of $(x_i)$ on the cotangent bundle to the configuration space, 
and $H = H(p_i,x_i)$ is the total energy  (i.e., the Hamiltonian function). 
The relation between the momenta $(p_i)$ and the velocities $(v_i =  \dot{x}_i)$
is given by the Legendre transformation. For example, in Newtonian mechanics we have
$p_i = m_i v_i = m_i \dot{x_i}$, where $m_i$ is the mass of the particle number $i$.

If one forgets about the momenta $(p_i)$, and consider \eqref{eqn:Hamilton}
as a system of equations of the configuration variables $(x_i)$ only, they
it is a system of second-order differential equations on $(x_i)$. By analogy, in financial market $(x_i)$ can play the role of prices, $H$ is the total market energy to be defined, and Equation \eqref{eqn:Hamilton}, plus some noise terms and external
forces and damping terms, can serve as the second-order differential equation for the movement of the prices. In the simplest case, this equation gives us the damped 
stochastic oscillator (see Section \ref{section:DampedOscillator}), 
which has been extensively studied in mathematics and physics
\cite{Bajaj_Waves2006,Gitterman-NoisyOscillator2005,ZungThien_Stochastic2015}. 
Such a simple oscillator model already fits real-world financial price movements, 
such as USD/EUR exchange rates and inflation-adjusted gold prices over the 
long term much better than first order models.

The rest of this paper is organized as follows:

In Section \ref{section:NoGo} we recall from a recent mathematical paper 
of ours \cite{ZungThien_Stochastic2015} with a former doctoral student 
(who went to work for a bank after his thesis) the no-go theorem for reduction 
of stochastic dynamical system: one cannot reduce the system by 
simply forgetting about some variables. Implication for us: 
the momenta of the stock prices, even though they can not be measured with
confidence, should enter the equations as variables on the same footing as 
stock prices, not just as constants or stochastic parameters. 
In Section \ref{section:MarketEnergy} we discuss the notion of market energy and
how to decompose it into the sum of different components corresponding
to different behaviors/strategies of market players. This market
energy plays the central role in the second order stochastic differential equation for the stock prices. In Section \ref{section:DampedOscillator}
we discuss the damped stochastic oscillator model for
a single stock, and some modifications. 
In Section \ref{section:TA} we discuss some technical analysis patterns which
can be explained by the market energy and other physics-like arguments.
Finally, in Section \ref{section:N-Oscillator} we propose a simple constrained 
$n$-oscillator model for a multi-stock market.

\section{No-go theorem for stochastic reduction}
\label{section:NoGo}

Stochastic differential equations (SDE's) in mathematical finance 
are often written in Ito form, but for global analysis it may be more 
convenient to write them in the following Stratonovich form, which 
can be done in a coordinate-free way and which behaves well under
changes of variables:
\begin{equation}
d x_t = X_0 dt + \sum_{i=1}^k X^i \circ d B^i_t,
\end{equation}
where $x$ denotes a point on a manifold $M$;
$X_0, X_1,\hdots X_k$ are vector fields on $M$, and $B_t^1,\hdots, B_k^t$
are independent Wiener processes, see, e.g., \cite{Bismut-Aleatoire1981,FrWe-Random2012,Hsu_StochasticAnalysis2002, 
Oksendal-SDE2003,ZungThien_Stochastic2015}. This SDE generates a continuous-time stochastic dynamical systems (SDS) whose associated \textit{stochastic vector field} is:
$$\displaystyle X = X_0 + \sum_{i=1}^k X_i \circ \dfrac{dB_{t}^i}{dt},$$
and whose  \textit{diffusion operator} is:
$$A_X = X + \dfrac{1}{2} \sum_1^k X_i^2.$$ 
The meaning of this diffusion operator is as follows: if we fix a point $x \in X$
and denote by $x_t$ the random position of $x$ after time $t$ by the random flow
of the SDS, then for any smooth function $f$ on $M$ we have
$$
(A_Xf)(x) = \lim_{t \to 0} \dfrac{\mathbb{E}^x[f(x_t)] - f(x)}{t}.
$$

Two SDS's on a manifold $M$ are the same if and only if they have the same components for their corresponding stochastic vector fields
up to a permutation. But for probability computations, it is only the
diffusion operator which matters. So we say that two SDS's on $M$
are \textit{diffusion-wise the same} if they have the same diffusion operator.
The proof of the following theorem is straightforward, see \cite{ZungThien_Stochastic2015}:

\begin{theorem}[\cite{ZungThien_Stochastic2015}]
\label{thm:NoGo}
Let $\Phi : M \to N$ be a smooth surjective map from a manifold $M$ to a  manifold $N$, 
and let $\displaystyle X = X_0 + \sum_{i=1}^k X_i \circ \dfrac{dB_{t}^i}{dt}$ 
be an SDS on $M$.
Then the diffusion process of $X$ is projectable (i.e., can be reduced)
to a Markov process on $N$ if and only if for any function $f : N \to \mathbb{R}$ and any two points $x,y \in M$ such that $\Phi(x) = \Phi(y)$ we also have \begin{equation}\label{eq:condition2} A_X(\Phi^*(f)(x)) = A_X(\Phi^*(f)(y))\end{equation} where $A_X$ is the diffusion operator of $X$. If this condition is satisfied and $\Phi$ is a submersion then the projected diffusion process on $N$ is generated by an SDS on $N$. 

In the case when $X= X_0$ is a smooth deterministic system then the deterministic process generated by $X$ on $M$ is projectable to a Markov process on $N$ if and only if for any points $x,y \in M$ such that $\Phi(x) = \Phi(y)$ we also have 
$\Phi_{*}(X(x)) = \Phi_{*}(X(y))$, and if this condition is satisfied then $X$ is projected to a smooth vector field on $N$.
\end{theorem}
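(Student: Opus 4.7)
The plan is to reformulate projectability in terms of the diffusion semigroup $(P_t)_{t\ge 0}$ on $M$ defined by $(P_t g)(x) = \mathbb{E}^x[g(x_t)]$, whose infinitesimal generator on smooth functions is $A_X$. The image process $y_t = \Phi(x_t)$ is Markov on $N$ exactly when there exists a semigroup $(Q_t)$ acting on functions on $N$ with $P_t(\Phi^* f) = \Phi^*(Q_t f)$ for every bounded smooth $f: N \to \RR$ and every $t \ge 0$; equivalently, $P_t$ preserves the subspace of functions that are constant on the fibers of $\Phi$.

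The ``only if'' direction is then immediate: if $P_t(\Phi^* f)$ is constant on fibers for all $t$, then differentiating the identity $P_t(\Phi^* f)(x) = P_t(\Phi^* f)(y)$ at $t=0$ for any $x,y$ with $\Phi(x)=\Phi(y)$ yields \eqref{eq:condition2}.

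For the ``if'' direction, assume \eqref{eq:condition2}. Then $f \mapsto A_X(\Phi^* f)$ descends to a well-defined linear operator $B$ on smooth functions on $N$ via $\Phi^*(Bf) = A_X(\Phi^* f)$. Setting $u(t,\cdot) = P_t(\Phi^* f)$, Kolmogorov's backward equation gives $\partial_t u = A_X u$ with $u(0,\cdot) = \Phi^* f$. If $v(t,\cdot)$ solves $\partial_t v = Bv$ on $N$ with $v(0,\cdot)=f$, then $\Phi^* v$ satisfies the same Cauchy problem as $u$ on $M$; uniqueness forces $P_t(\Phi^* f) = \Phi^*(Q_t f)$ with $Q_t f := v(t,\cdot)$, so $y_t$ carries a Markov semigroup. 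When $\Phi$ is a submersion, working in local fibered coordinates around any point of $M$ shows that the identity $\Phi^*(Bf) = A_X(\Phi^* f)$ forces the coefficients of $B$ to match those of a second-order operator of the form $B = Y_0 + \tfrac{1}{2}\sum Y_i^2$ on $N$, which is precisely the generator of an SDS on $N$.

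For the deterministic case $X = X_0$, we have $A_X = X_0$, a first-order operator, and the chain rule gives $X_0(\Phi^* f)(x) = df_{\Phi(x)}\bigl(\Phi_*(X_0(x))\bigr)$. When $\Phi(x) = \Phi(y)$ condition \eqref{eq:condition2} then becomes $df_{\Phi(x)}\bigl(\Phi_*(X_0(x)) - \Phi_*(X_0(y))\bigr) = 0$ for all smooth $f$, which is equivalent to $\Phi_*(X_0(x)) = \Phi_*(X_0(y))$; in that case $X_0$ manifestly descends to a smooth vector field on $N$. The main technical obstacle is the uniqueness step in the backward Kolmogorov PDE (typically handled by a growth bound or by working on a compact manifold), together with the need, in the non-submersion setting, to accept that $B$ may fail to be representable as an SDS generator even though the projected process on $N$ is still Markov --- this is exactly why the submersion hypothesis appears in the final clause.
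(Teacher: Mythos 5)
Your argument is correct and follows essentially the same route as the source: the paper at hand gives no proof of Theorem \ref{thm:NoGo} (it defers to \cite{ZungThien_Stochastic2015}), and that reference likewise characterizes projectability by whether the diffusion operator $A_X$ preserves the subspace of fiber-constant functions $\Phi^*(C^\infty(N))$, passing between the generator and the semigroup exactly as you do. The two technical points you yourself flag --- well-posedness/uniqueness for the (possibly degenerate) backward Kolmogorov equation in the ``if'' direction, and, in the submersion case, writing the projected second-order symbol as a sum of squares of smooth vector fields (e.g.\ by pulling the $X_i$ back along a local section) --- are the only places needing more care, and they are standard.
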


In particular, if $M = T^*N$ or $M = TN$ is the cotangent or tangent bundle of $N$, the surjective map $\Phi$ is the projection map, and
our second order model is an SDS on $M$, then in general it is impossible for the conditions of the above theorem to be satisfied,
so we cannot project it to an SDS on $N$ 
(i.e., to a first order model) by simply forgetting about the
momenta. Even if the second order model is deterministic, we still cannot reduce it to a stochastic first order system. So Theorem \ref{thm:NoGo} is a kind of no go theorem for dimensional reduction 
of stochastic models. (Unless when there are some obvious symmetries, in which case the system can be reduced).  What one does in stochastic modelling 
by simply forgetting about some hidden variables is not
really reduction, but rather a kind of rude approximation. 

\section{Market players and market energy}
\label{section:MarketEnergy}

Over the last
three decades, there have been a lot of works on agent-based models of
financial markets (see, e.g., \cite{BouchaudCont_LangevinCrash1998, CMZ_Prototype1997, ContBouchaud_Herd2000, DayHuang_BullBear1990, Farmer_AgentBased2001, Farmer_Challenge2011-2014, FLPPS_AgentStochastic2012, GiaBou_AgentBased203, LeBaron_AgentBased2006, LLS_MicroscopicMarket2000, SZSL_AgentBased2007, TWG_Chartists2015, Voit_StatisticalFinance2010, Westerhoff_Agentbased2009}), and some of them seem to be very successful 
in simulating the real-world markets and for devising market-beating strategies. Nevertheless, we have not
seen the notion of market energy in these works, even though
some market equations introduced there look Hamiltonian-like. 
We believe that this is a very useful notion, not only for our paper,
but also for the other agent-based models as well.

In physics, the energy plays a central role: the equation of motion 
can be derived from the total energy function in Hamiltonian formulation, 
or can be written as a variational (Euler-Lagrange) equation in Lagrangian
formulation, using the action function, which is also a function of energy 
components, and external forces. 

Here we also want to figure out what is the market energy function
which governs the movement of financial prices (together with 
external forces and noises). Like in physics, the market energy
can be decomposed into a sum of several components, such as
follows.

\subsection{Potential market energy}

\textit{Value investors}
tend to buy/sell stocks which they think are undervalued/overvalued. 
Because of that, when an asset price is different from its average perceived fair value level, then the difference between the price and the fair value creates 
a potential net aggregate buying or selling action from the value investors,
i.e. a potential market energy. 

As a first approximation, one can think of this potential energy
as being proportional to the square of the level of mispricing, 
because its derivative (which is the force leading to the change 
in price momentum, because the higher the derivative, 
the more players are going to react) is approximatively
proportional to the level of mispricing. 

It may happen that there are several ``centers of gravity'' for the potential 
energy function, i.e. several different values which can pretend to be ``the
average fair value'', depending on the level of optimism in the market.
Consider, for example, a scenario with two particular groups of value investors:
the optimists and the pessimists. The optimists 
have their perceived fair value o the stock at $V_1$, but they may become
bankrupt or too depressed to buy if the price falls too much below $V_1$.
Conversely, the pessimists  have their perceived 
fair value of the stock at $V_2 << V_1$, but won't have anything to sell if the
price is too much above $V_2$. In this scenario, the potential market energy
function may look like a double well potential, with two bottoms (basins
of attraction)  at $V_1$ and $V_2$.  

One may also think of other scenarios, where the potential energy function is
even more complicated. Nevertheless,
it is safe to assume that this function 
goes to infinity when the price $x$ goes to infinity when the level of 
mispricing goes to infinity.

\subsection{Kinetic market energy} 

\textit{Momentum players}
tend to buy/sell stocks which have shown an increase/decrease in 
prices (i.e. positive/negative momentum, modulo noises). 
This ``buying begets buying'' herd behavior reflects the inertia 
of a stock and can be associated to its kinetic market energy, which is approximatively proportional to the square of the level of net buying
or net selling, and hence is approximatively proportional to the square 
of the momentum. (A transaction is both a buy and a sell, but
will be counted as a buy if done at the ask price and creating an upward pressure 
on the price).  

There are also  investment strategies which are not ``momentum chasing'' 
per se, but which still add to the ``selling begets selling'' kinetic market energy, 
for example the ``performance insurers'' and hedging strategies, which are 
blamed for the 1987 stock market crash (see \cite{KimMarkowitz_InvestmentRules1989, 
LLS_MicroscopicMarket2000}).

\subsection{Thermodynamic market energy} 

It may happen that there is a lot of trading but the price of the stock does not really move
in any direction, except for a random noisy movement like the Brownian motion.
This noise in the market, which consists of a large amount of micro-movements which mostly cancel the direction of each other, may be associated to the market's 
thermodynamic energy (heat) and is responsible for the stochastic term in the 
price motion equation. 

Those market makers who provide liquidity for the market without betting on its direction one way or another may be considered as contributing the heat 
to the market as well. There are always 
energy-losing damping forces on the market (such as transaction costs, bid/ask spreads), so the mechanical energy level of the market may slowly go to zero (i.e., the market dies out) if there is no energy pumped into it. However, the market heat can sporadically turn into mechanical energy (at least in the damped stochastic oscillator model), just preventing the market from dying, even with damping and without exterior sources of energy. 

\subsection{Chemical market energy and other energy components}

When two stocks merge, the merger may release (or absorb) a lot of market energy. This is an example of what we call the chemical market energy, i.e. 
the energy related to "chemical" financial reactions. In mechanics, one often ignores this energy and other energy components, assuming them to be invariant
(and hence having no effect on the equation). For simplicity, in this
paper we will ignore these types of market energy. 

\subsection{Market energy, volume and volatility}

A lot of research papers on financial markets confirm the 
strong positive relation between trading volume and volatility,
and also the persistence (long memory) 
of volume and volatility in the market, see, e.g., 
\cite{LeBaron_AgentBased2006}. We propose here 
to explain these phenomena  by market energy.

Both volume and volatility are positively related
to market energy: the volume is roughly proportional to the kinetic market
energy, and so is the square of the volatility for some portion
of the kinetic energy.Hence one can conjecture that the volatility
is  highly correlated to the square root of the volume. 

In general,
it takes many trades (a lot of market energy) to move a market (especially
when the inertia is high), and the market movement is partly
reflected in volatility, and that's another way to say why volume and
volatility are highly positively correlated.

If one ignores external forces and dissipation then the market energy is
conserved. Due to this conservation principle, the energy is not 
conserved in general because the system is not closed, but it 
can't change very fast. That's why both volume and volatility has a memory.

Another explanation comes from the quasi-periodic nature of
the whole market (see Section \ref{section:N-Oscillator}):
 the market energy of each individual stock also changes
in a stochastically quasi-periodic manner.

\section{Individual stocks as damped stochastic oscillators}
\label{section:DampedOscillator}

\subsection{The model}

One of the simplest models for financial markets is the damped stochastic oscillator,
which can be used for the price movement of a single financial asset or commodity, 
such as gold, oil, SP500 index, or EUR/USD exchange rate, etc. 
In this model, the stochastic dynamical system is given by the stochastic vector field 
(see \cite{ZungThien_Stochastic2015}):

\begin{equation}
X = X_h +  D + \sigma B
\end{equation}
on the symplectic space $(\mathbb{R}^2, \omega= dx \wedge dy)$, 
where
\begin{equation}
D = - f(\sqrt{x^2 + y^2}).(x \partial x + y \partial y) 
\end{equation}
is the damping term,
\begin{equation}
B = \partial x  \circ \dfrac{d B^1_t}{dt} + 
\partial y  \circ \dfrac{d B^2_t}{dt}
\end{equation}
is the generator of a 2-dimensional Brownian motion (the noise), $\sigma$ is the volatility
coefficient (the amplitude of the noise), and
\begin{equation}
h = \dfrac{1}{2} (x^2 + y^2)
\end{equation}
is the market  energy of the asset. 
Here $x$ is the mispricing and $y$ is the
momentum. (Say $x = P - V$ or $x = \ln(P/V)$, where $P$ is the price and $V$ is the
fair value. We will assume that the price $P$ 
fluctuates a lot but the fair value $V$
varies very slowly over the time, so that most of the variation of $P$ is reflected in
the variation of $x$). The units (time, price, etc.) here are already normalized so
that $h$ takes the simplest form $h = \dfrac{1}{2} (x^2 + y^2)$. This model as a
$SO(2)$-symmetry which makes it into an integrable stochastic system, easy to investigate
(see \cite{ZungThien_Stochastic2015}).

The dissipative term $D$ in $X$ is due to market friction, e.g. trading fees, and has
the  energy-losing effect on the market. On the other hand, the noise
term $B$ has the energy-enhancing effect. These two effects cancel out
each other in a stochastic way. As a result, the expected market energy of a
single-stock market in this model does not die out (go to zero) or explode (go to
infinity) over time, but rather tends to a stable positive energy level. Similarly
to the mean-reverting  Ornstein-Uhlenbeck process, 
there is a stationary distribution density of
energy levels for the damped stochastic oscillator, which is concentrated around a
stable energy level. (Notice, however, that oscillators are not  mean-reverting:
they fo through the mean back and forth but do not ``converge'' to the mean. 
When they are at the mean they tend to go far away from the mean again 
if the energy is large enough).

The above simple oscillator model is a bit simplistic, because an individual 
stock is not a closed system like in the model, and the rest of the market can
affect it greatly. Nevertheless, it shows the stochastic quasi-periodic
nature and boom-bust cycles of real-world financial markets.

Let us look at two examples: gold prices and EUR/USD rates.

\begin{figure}[!ht]
\includegraphics[width=0.8\textwidth]{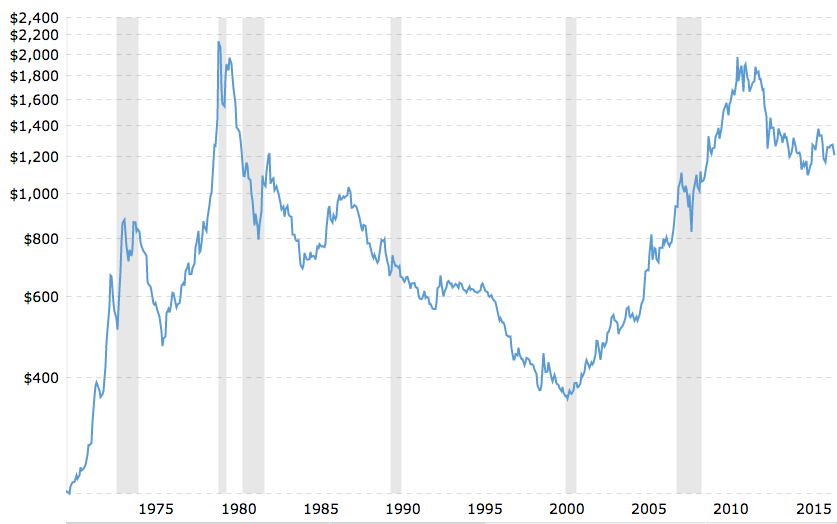}
\caption{Inflation-adjusted gold prices, from 1970 to 07/2017. 
Chart drawn by macrotrends.net}
\label{fig:Gold}
\end{figure}

\subsection{Gold price as a noisy oscillator}

Figure \ref{fig:Gold} is a inflation-adjusted
chart of historical gold prices from 1970 (not long before the US abandoned
the last gold peg 1 ounce = \$42.22) until the date of writing of this article 
(07/2017). No one knows for sure what is the fair value of gold, but one may argue 
that this inflation-adjusted fair value does not change much with time (according to 
macro-economic models), and that once in a while the price coincides with the fair value, i.e. the mispricing is 0 (according to the oscillator model). During the late 1970s and the 1980s, the price of gold is around 400 USD/ounce 
(or around 800-1000 USD in 2017's dollars, inflation-adjusted), 
so we may presume that the fair value is around those numbers at that time. 
When gold goes to 200 USD/ounce in 2000 (or under 400 USD in 2017's dollars), 
it becomes very underpriced according to the oscillator model, creating
a big potential speculation energy which results in a big upward movement later
on. In 2010s, the fair value of gold can be estimated at around 1000 USD/once
(inflation-adjusted). Of course, the fair value of gold doesn't have to say 
constant, but can change, due to the growth of world's economy, the growth of 
gold supply and other factors, but here for simplicity we assumed that didn't change much over the last 30-40 years. If it went up, say 30\% over the last 30-40 years, then the fair price of gold would be closer to 1300 than 1000 USD/ounce right now. When gold went above 1500 USD/ounce in 2011, it was already very probably 
overpriced, but it continued to move up due to positive momentum. Eventually this momentum died out, and what remained is a big potential speculation energy pointing 
to a big future downward movement. Sure enough, gold fell down from its peak 
of almost 2000 USD/ounce in 2011 to its current price of about 1200 USD/ ounce.

Notice also that during the period late 1980s and early 1990s, the price of gold
didn't move much, i.e. the speculation energy seems to die out during that period.
This loss of market energy in gold can't be explained in the single damped stochastic
oscillator model of the market, according to which the speculation energy will
(almost surely) never die out but will fluctuate around a certain energy level. 
However, it can be explained by using multi-body models of the market, where the speculation energy (or \textit{hot money} in financial jargon) 
can move from one component of the market to another.

Remark. Going back further in time, a free chart from goldchartsrus.com (not reproduced here)
shows that , since 1600 (4 centuries ago), inflation-adjusted gold prices 
oscillated around 450USD/ounce a great number of times and rarely shot up above 1000USD,. So if we assume that the inflation data over the last 4 centuries is correct (which is a big assumption, maybe not true), then gold is right now still more 
expensive than during centuries ago.  
   
\subsection{EUR/USD exchange rates}

Figure \ref{fig:EUR-USD-PPP} is a chart 
chart of historical EUR/USD exchange rates over the past 20 years. Since the USA
and the Eurozone have slightly different inflation rates over those years,
it may be better to divide the exchange rate by the PPP (purchasing power parity) 
EUR/USD curve, which serves as a kind of ``fair price'' curve, and which 
is also given in Figure \ref{fig:EUR-USD-PPP}. 

\begin{figure}[!ht]
\includegraphics[width=0.8\textwidth]{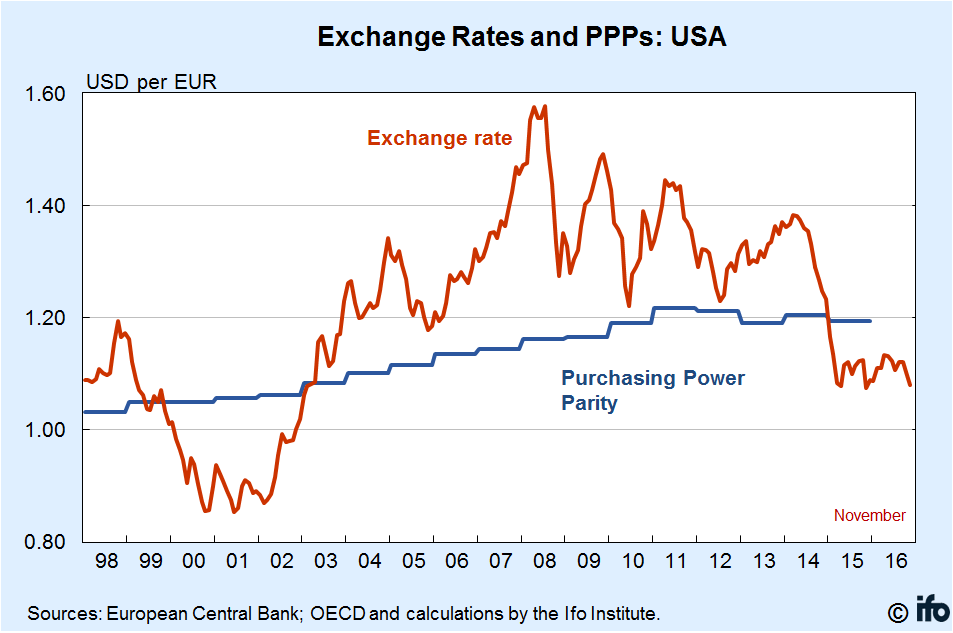}
\caption{EUR/USD exchange rate and PPP from 1997 to 2017. 
Source: https://www.cesifo-group.de}
\label{fig:EUR-USD-PPP}
\end{figure}

The quasi-periodic oscillating nature of the EUR/USD exchange rates around 
the PPP is clear from the Figure \ref{fig:EUR-USD-PPP}. Nevertheless, one may notice
that the potential energy function looks more like a double-well function (with an
optimist and a pessimist perceived fair value for EUR/USD) than a single well. 

\section{Market patterns}
\label{section:TA}

\textit{Technical analysis}, i.e. the search for market patterns, 
is used by a great number of market players with various degrees of success
(see, e.g., \cite{AGS_TA2000,BLL_TA1992,CaCo_Momentum1995,
CaDe_NonlinearityMarkets2011,LuiChong_TA2013}), and is 
at odds with the \textit{efficient market hypothesis} (see, e.g., \cite{Fama-EMH1970,LoMa_NonRandom2001,Samuelson_Random1965}). For that matter, most agent-based models are also at odds with this hypothesis. 

In this section, we want to give an explanation of some simple
and easily recognizable market patterns by using market energy and
second order models. Namely, we will discuss three market patterns:
U-shaped vs. V-shaped reversals; 
resistance breaking; and market aftershocks. 

\subsection{U-shaped versus V-shaped reversals}

The financial price reversals are often divided into 2 main types: U-shaped
and V-shaped. The difference between U-shaped ad V-shaped reversals is
in the kinetic energy: at the point of a U-shaped reversal, the kinetic energy goes to
0, i.e., the momentum of the stock (end hence the kinetic energy)
dies out before reversing, like a ball going up
and then makes a U-turn and falls down on its own weight.
On the contrary, in a V-shaped reversal situation, when the stock hits a hard ``wall'' 
(strong resistance), the kinetic energy remains positive, the absolute value of the 
momentum does not change much, it's just the direction which changes, similarly
to an elastic \textit{bouncing ball} when hitting the wall.

An important physical property of physical objects which can bounce back strongly 
when hitting a wall is their elasticity. So apparently,
the market is also elastic when it makes V-shaped reversals, and this elasticity 
might be explained by the prevalence of active market swingers, i.e. 
active traders (or trading strategies) who 
switch sides easily when the stock hits a resistance.

\begin{figure}[!ht]
\includegraphics[width=0.8\textwidth]{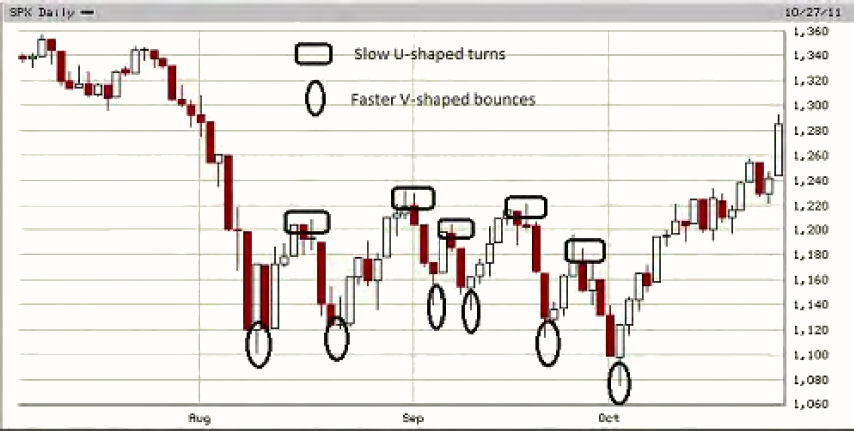}
\caption{SP500 Index barchart from 07-2011 to 10-2011}
\label{fig:SP500}
\end{figure}

For example, 
Figure \ref{fig:SP500} is a daily chart of the SP500 index for the period from
07/2011 to  11/2011. Notice how it also moved like a bouncing ball during the
months 08/2011 -- 09/2011: every time it falls down to a level near 1100 it makes
a V-shaped (fast) reversal, but when it goes up to around 1200 it makes a U-shaped (slow) reversal.

Notice also that V-shaped patterns can be seen more often in short-term movements, rather than long-term movements, of a financial asset price. That is because the  ``walls'' can often be set up by
the ``houses'', who are strong enough to control the price of
a stock short-term, but not over the long term.

\subsection{Resistance breaking}

A market resistance may sometimes be analogous to a 
dike which prevents water waves from overflowing (or a wall
as in the V-shaped market reversal pattern discussed above). 
But when the waves are strong enough to break the dike, 
i.e. the market momentum is strong enough to break the resistance, 
there will be a flood, i.e. a large market move once the resistance is broken.

\begin{figure}[!ht]
\includegraphics[width=0.8\textwidth]{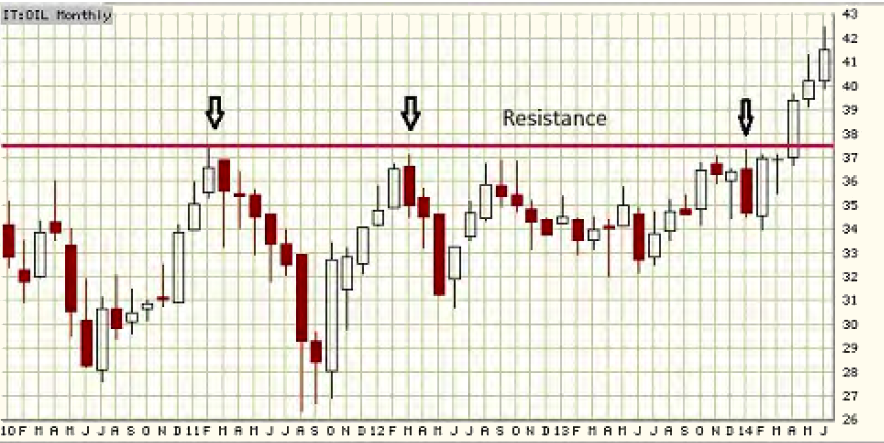}
\caption{Resistance breaking of Lyxor ETF STOXX Europe 600 Oil Gas -
chart between 2010-2014}
\label{fig:Oil2010-14}
\end{figure}

Figure \ref{fig:Oil2010-14} is typical example of resistance
breaking. Notice how the price bounced back 
the first times it hit the resistance, and then finally the resistance
is broken due to strong market energy.

From the point of view of market energy, a wall is a sharp spike in the
potential energy function near some point of the price variable. In order
to go over this potential wall to the other side of the price region,
the market needs a lot of energy. Maybe during the first attempts at coming
close to the wall, there is not enough market energy to go over it, so the
price falls back (potential energy changing into kinetic energy). But with
some additional from the outside (for example, the whole stock market
is moving in some direction, giving additional kinetic energy to the stock), the potential wall is finally overcome, and after that it often happens that the stock price moves very fast due to high potential energy
turning into kinetic energy after the wall.

\subsection{Market aftershocks}

After a big earthquake hits some area,
there are often after shocks, which are less strong but can also 
be quite violent. It is partly because a lot of energy is still there
and cannot dissipate too quickly. The same happens in the stock market:
the level of market energy created after a shock is high, and this high energy results in big (usually oscillating) aftershock moves.

\begin{figure}[!ht]
\includegraphics[width=0.8\textwidth]{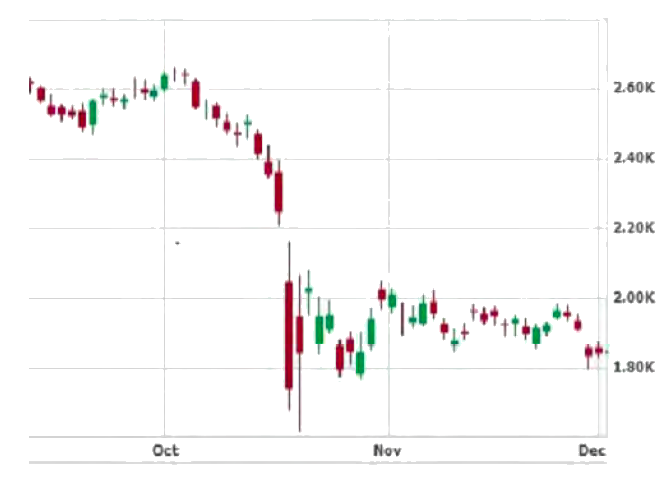}
\caption{Dow Jones Industrial Average crash 10/1987}
\label{fig:DJIA1987}
\end{figure}

Figure \ref{fig:DJIA1987} is an example of a big stock market 
shock in 1987, together with large aftershock movements. 

Notice that, just as earthquakes are often localized,
market shocks happen much more easily in individual 
stocks than for the whole market: a little market energy may already
be enough for a small stock to make a huge move. 

\section{The market as a constrained n-oscillator}
\label{section:N-Oscillator}

\subsection{The linear deterministic model} \hfill

In this model, we consider the total market which
contains every economical asset, i.e. it represents the whole economy.
The assets are divided into $n$ asset classes $A_1,\hdots,A_n$, 
for example : energy, real estate, food, transport, communication, etc. 
The total price of each asset class is denoted by $P_i = p(A_i)$. Then
$P = \sum P_i$ is the total net worth of the whole economy, and 
we call 
\begin{equation}
R_j = \frac{P_j}{\sum P_i}
\end{equation}
the \textit{\textbf{relative price}} of the asset class $A_j$, so that
$\sum R_j = 1$.

We assume that class $A_i$ has a relative fair value $v_i$ in the economy 
($\sum v_i = 1$), which  varies very slowly with the time. 
For example, people will pay only a certain percentage of their income 
for telecommunication needs, and therefore fast technological advances 
don't make this sector occupy a much larger share of the whole economy, 
but make the prices per unit drop fast instead. We will be interested in 
the asset mispricings
\begin{equation}
x_i = R_i -  v_i.
\end{equation}
Similarly to the oscillator model for a single stock, we will assume that the
market energy has the form
\begin{equation} \label{eqn:Energy1}
E = \dfrac{1}{2} \sum_1^n a_i x_i^2 +  \dfrac{1}{2} \sum_1^n b_i \dot{x}_i^2.
\end{equation}
Here $\frac{1}{2} \sum a_i x_i^2$
is the potential energy, $\dfrac{1}{2} \sum_1^n b_i \dot{x}_i^2$ is the kinetic energy,
and $a_i, b_i > 0$ are constant asset-specific coefficients. 
So we get a Hamiltonian system with the energy function E given by Formula
\eqref{eqn:Energy1} and a linear constraint
\begin{equation}\label{eqn:LinearConstraint1}
\sum x_i = 0. 
\end{equation}
Since the constraint is holonomic, this is a
Hamiltonian system with $n-1$ degrees of freedom.

In order to write down the equation of motion one can for example eliminate
one of the variables (say by putting $x_n = - \sum_{i=1}^{n-1} x_i$)  
and consider it as a system
on $T^*\mathbb{R}^{n-1}$. Equivalently, one can use the Lagrangian multiplier 
method as follows:

The Lagrangian action function is:
\begin{equation}
L = \frac{1}{2} \sum a_i x_i^2 - \frac{1}{2} \sum b_i \dot{x}_i^2
\label{(4.8)}
\end{equation}

The equation is
$\dfrac{\delta L}{\delta x_i} = \lambda \dfrac{\partial f}{\partial x_i}
\quad \forall\ i = 1,\hdots,n,$
where $f(x) =  \sum x_i$ is the constraint function, and  $\lambda$
is the Lagrangian multiplier to be determined.
Since
$
\dfrac{\delta L}{\delta x_i} =\dfrac{\partial L}{\partial x_i} 
- \dfrac{d}{dt}\dfrac{\partial L}{\partial \dot{x}_i} 
= a_i x_i + b_i \ddot{x}_i
$
and
$
\dfrac{\partial f}{\partial x_i} = 1,
$
we get the system of equations:
\begin{equation} \label{eqn:motion16}
a_i x_i + b_i \ddot{x}_i = \lambda \quad  \forall\  i=1,\hdots, n.
\end{equation}

Since $\sum x_i = 0$ implies
$\sum \ddot{x}_i = 0$, we get the following equation for $\lambda$:
$ \displaystyle \sum \dfrac{\lambda}{b_i} = \sum \dfrac{a_ix_i}{b_i}$,
which implies that
$\lambda = 
\left( \sum \dfrac{a_ix_i}{b_i} \right) / \left( \sum \dfrac{1}{b_i} \right).$
Thus the equation of motion \eqref{eqn:motion16} is a system of linear 
differential equations with
constant coefficients
\begin{equation}\label{eqn:ddotxi_withlinearconstraint}
\ddot{x}_i = \left(\sum \dfrac{a_jx_j}{b_j}\right)/
\left(b_i\sum \dfrac{1}{b_j}\right)  - \dfrac{a_ix_i}{b_i}
\end{equation}
and with the constraint $\sum x_i = 0$.

\begin{proposition}
By a linear transformation
\begin{equation}
x_i = \sum c_{ij} z_j,
\end{equation}
where $(c_{ij})$ is an appropriate constant 
$n \times (n-1)$ matrix of rank $n-1$, the system \eqref{eqn:ddotxi_withlinearconstraint} 
with constraint \eqref{eqn:LinearConstraint1} 
becomes a system of $n-1$ uncoupled harmonic oscillators:
\begin{equation}
\ddot{z}_i = - \lambda_i^2 z_i.
\end{equation}
with constants $\lambda_1,\hdots, \lambda_{n-1} > 0$ (called \textbf{\textit{eigenvalues}} or
\textbf{\textit{normal modes}} of \eqref{eqn:ddotxi_withlinearconstraint}).
\end{proposition}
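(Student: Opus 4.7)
The plan is to recognize the system \eqref{eqn:ddotxi_withlinearconstraint} together with the holonomic constraint \eqref{eqn:LinearConstraint1} as the classical \emph{small oscillations} problem on the $(n-1)$-dimensional constraint hyperplane
\begin{equation*}
W = \{x \in \mathbb{R}^n : x_1 + \cdots + x_n = 0\},
\end{equation*}
and to reduce it to the simultaneous diagonalization of two positive-definite symmetric quadratic forms. The forms in question are the kinetic form $T(\dot{x}) = \tfrac{1}{2}\sum b_i \dot{x}_i^2$ and the potential form $V(x) = \tfrac{1}{2}\sum a_i x_i^2$. Because $a_i, b_i > 0$, both are positive definite on all of $\mathbb{R}^n$, and hence remain positive definite after restriction to $W$ (and, for $T$, to the velocity subspace $\{\dot{x}:\sum \dot{x}_i = 0\}$, which coincides with $W$).

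First I would introduce an arbitrary linear parametrization of $W$: pick a basis $w_1,\dots,w_{n-1}$ of $W$ and write $x = \sum_j w_j u_j$. Relative to these coordinates the restricted forms become $T|_W = \tfrac12 \dot{u}^{\mathsf T} \widetilde{B}\,\dot{u}$ and $V|_W = \tfrac12 u^{\mathsf T} \widetilde{A}\,u$ for real symmetric positive-definite matrices $\widetilde{A},\widetilde{B}$ of size $n-1$. The classical simultaneous-diagonalization theorem (generalized eigenvalue problem) then provides an invertible real matrix $Q$ with $Q^{\mathsf T}\widetilde{B}Q = I_{n-1}$ and $Q^{\mathsf T}\widetilde{A}Q = \mathrm{diag}(\lambda_1^2,\dots,\lambda_{n-1}^2)$, where each $\lambda_j^2 > 0$ because $\widetilde{A}$ is positive definite. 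Setting $u = Qz$ produces coordinates $(z_1,\dots,z_{n-1})$ on $W$; composing the embedding $W \hookrightarrow \mathbb{R}^n$ with this change of basis produces the required $n \times (n-1)$ matrix $(c_{ij})$ of rank $n-1$ with $x_i = \sum_j c_{ij} z_j$.

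To finish I would verify the equations of motion in these coordinates. Since $(z_1,\dots,z_{n-1})$ are genuine coordinates on the constraint manifold, the Lagrange multiplier $\lambda$ disappears, and the equations of motion are simply the Euler-Lagrange equations for the restricted Lagrangian, which in the diagonalizing coordinates takes the form $L|_W = \tfrac{1}{2}\sum_j \lambda_j^2 z_j^2 - \tfrac{1}{2}\sum_j \dot{z}_j^2$ and yields the uncoupled system $\ddot{z}_j = -\lambda_j^2 z_j$. Equivalently, one may substitute $x_i = \sum_j c_{ij} z_j$ directly into \eqref{eqn:ddotxi_withlinearconstraint} and check decoupling component by component.

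There is no substantive obstacle: the proposition is a restatement of the classical theorem on small oscillations from analytical mechanics (see, e.g., Arnold \cite{Arnold-Mechanics1978}). The only point that genuinely uses the hypotheses on the coefficients is the strict positivity of the $\lambda_j^2$ (so the normal modes are genuinely oscillatory rather than exponential or neutral), which is a direct consequence of the positive-definiteness of $V|_W$ and hence of $a_i > 0$.
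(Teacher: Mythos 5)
Your proof is correct. The paper disposes of this proposition in one sentence by a slightly different (though closely related) route: it passes to the Hamiltonian picture on $T^*\mathbb{R}^{n-1}=\mathbb{R}^{2(n-1)}$ after eliminating one variable via the constraint, observes that the Hamiltonian is a positive-definite quadratic form, and invokes the normal form $\frac{1}{2}\sum\lambda_i(z_i^2+w_i^2)$ for such Hamiltonians under \emph{linear canonical} (symplectic) transformations. You instead stay in the Lagrangian picture and invoke the simultaneous diagonalization of the two positive-definite quadratic forms $T$ and $V$ restricted to the hyperplane $\{\sum x_i=0\}$, i.e.\ the generalized eigenvalue problem, under a general linear change of coordinates. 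The two lemmas are equivalent in the positive-definite case (the symplectic normal form is usually proved via simultaneous diagonalization), and both are the classical small-oscillations theorem cited to Arnold; your version has the mild advantage of being pure linear algebra and of producing the $n\times(n-1)$ matrix $(c_{ij})$ explicitly as the composite of the parametrization of the hyperplane with the diagonalizing map, which is what the statement actually asks for. Your final Lagrangian $L|_W=\frac{1}{2}\sum\lambda_j^2 z_j^2-\frac{1}{2}\sum\dot z_j^2$ follows the paper's (sign-reversed relative to the usual $T-V$) convention for $L$; since an overall sign does not change the Euler--Lagrange equations, the resulting $\ddot z_j=-\lambda_j^2 z_j$ is correct either way.
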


The proof is a simple exercise in classical mechanics \cite{Arnold-Mechanics1978}: 
when written as a Hamiltonian system on 
$T^*\mathbb{R}^{n-1} = \mathbb{R}^{2(n-1)}$, we have a positive definite
quadratic Hamiltonian function, and any such function can be written as
$\dfrac{1}{2}\sum \lambda_i (z_i^2 + w_i^2)$
in some linear canonical coordinate system $(z_i^2,w_i)$ on $\mathbb{R}^{2(n-1)}$.

The general solution of our model has the form:
\begin{equation}
x_i = \sum^{n-1}_{j=1}
c_{ij} \sin(\lambda_j t + d_{ij})
\end{equation}
with appropriate coefficients $c_{ij}$ and $d_{ij}$ 
(so that the constraint $\sum x_i = 0$ is satisfied). 
Thus, the mispricing of each asset  is 
a quasi-period function with periods $\lambda_1,\hdots,\lambda_{n-1}$. 
Notice that all the asset classes share the same periods.

We will call the above simple model the (linear deterministic) 
\textit{\textbf{constrained $n$-oscillator model}} of the market.

\begin{remark}{\rm
In some physics textbooks (e.g., \cite{Bajaj_Waves2006}) 
one can find a so called coupled
$n$-oscillator model, which explains the waves in materials and which consists
of a chain of masses connected to each other by springs. Our model is similar to,
but different from, this coupled $n$-oscillator model, because the kinetic energy in
our model is different from the kinetic energy is not the same.
}
\end{remark}

\begin{figure}[!ht]
\includegraphics[width=0.6\textwidth]{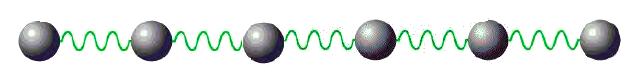}
\caption{A coupled n-oscillator model in physics, which is not the same as
our constrained n-oscillator model}
\end{figure}

\subsection{Frequencies of the system} \hfill

In the linear model with the energy function
$E = \dfrac{1}{2} \sum a_i x_i^2 +  \dfrac{1}{2} \sum b_i \dot{x}_i^2$
and the constraint $\sum x_i = 0$, we will call
\begin{equation}
E_i =  \dfrac{1}{2}  a_i x_i^2 +  \dfrac{1}{2}  \dot{x}_i^2
\end{equation}
the \textbf{\textit{energy of the $i$-th component}}, 
and the number $ \gamma_i = \sqrt{a_i/b_i}$
the \textbf{\textit{proper frequency}} of the $i$-th component. 
If there were no constraint then $x_i(t)$ would be
a periodic function in time $t$  with period $\dfrac{2\pi}{\gamma_i}$.

The following proposition shows the relationship between 
the frequencies of the linear
constrained $n$-oscillator and the proper frequencies of its components.

\begin{proposition}
Assume that the proper frequencies 
$\gamma_1 = \sqrt{a_1/b_1},\hdots, \gamma_n = \sqrt{a_n/b_n}$
of the components of the above linear constrained $n$-oscillator 
are ordered in an increasing way:
\begin{equation}
\gamma_1 \leq \gamma_2 \leq \hdots \leq \gamma_n.
\end{equation}
Then the linear constrained $n$-oscillator is equivalent to a system of $(n-1)$ 
free (uncoupled) harmonic oscillators whose frequencies 
$\lambda_1,\hdots,\lambda_{n-1}$ 
satisfy the inequality
\begin{equation} \label{eqn:GC_Inequalities}
\gamma_1 \leq \lambda_1 \leq \gamma_2  \leq  \lambda_2 \leq \hdots \leq \gamma_{n-1} 
\leq\lambda_{n-1} \leq \gamma_n.
\end{equation}
Conversely, if $\gamma_i$ and $\lambda_i$ are arbitrary positive numbers which 
satisfy \eqref{eqn:GC_Inequalities} in the strict sense (i.e., there is no equality), 
then there exist positive numbers $a_i, b_i$ such that 
$\gamma_i = \sqrt{a_i/b_i}$
and the frequencies of the above constrained n-oscillator 
are $\lambda_1,\hdots,\lambda_{n-1}$.
\end{proposition}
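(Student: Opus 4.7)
The plan is to reformulate the interlacing claim as a classical secular-equation problem and then read off the converse by a partial-fractions identity.

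\textbf{Step 1: The secular equation.} From the equations of motion \eqref{eqn:motion16} one has $b_i\ddot x_i + a_i x_i = \lambda(t)$ with the Lagrange multiplier $\lambda(t)$ determined by the constraint. Substituting a normal-mode ansatz $x_i(t) = c_i \sin(\omega t + \varphi)$, $\lambda(t) = \mu_0 \sin(\omega t+\varphi)$ gives $(a_i - b_i\omega^2) c_i = \mu_0$, hence $c_i = \mu_0/(a_i - b_i\omega^2)$ provided $\omega^2 \neq \gamma_i^2 = a_i/b_i$. Imposing $\sum c_i = 0$ and discarding the trivial branch $\mu_0 = 0$ reduces the problem to finding the positive roots of
\begin{equation}\label{eqn:secular}
F(\mu) \;:=\; \sum_{i=1}^n \frac{1/b_i}{\gamma_i^2 - \mu} \;=\; 0,\qquad \mu = \omega^2.
\end{equation}
Since our reduced system has $n-1$ degrees of freedom, there must be exactly $n-1$ such frequencies, and these are the $\lambda_j$ of the proposition.

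\textbf{Step 2: Interlacing via monotonicity of $F$.} Assume first strict inequalities $\gamma_1 < \gamma_2 < \cdots < \gamma_n$. On any open interval $(\gamma_i^2, \gamma_{i+1}^2)$, the function $F$ is smooth and $F'(\mu) = \sum_j (1/b_j)/(\gamma_j^2-\mu)^2 > 0$, so $F$ is strictly increasing. Because $1/b_i > 0$, we have $F(\mu)\to -\infty$ as $\mu\downarrow \gamma_i^2$ and $F(\mu)\to +\infty$ as $\mu\uparrow \gamma_{i+1}^2$. Hence $F$ has exactly one zero $\lambda_i^2$ in $(\gamma_i^2,\gamma_{i+1}^2)$, which is the Cauchy interlacing inequality in the required strict form. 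A check of the behavior on $(-\infty,\gamma_1^2)$ and $(\gamma_n^2,+\infty)$ (where $F$ has no zero) confirms that these $n-1$ roots exhaust the spectrum. The non-strict statement then follows by a standard perturbation/continuity argument: perturb equal $\gamma_i$'s to make them distinct, apply the strict result, and pass to the limit.

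\textbf{Step 3: The converse by partial fractions.} Suppose $\gamma_1 < \lambda_1 < \gamma_2 < \cdots < \lambda_{n-1} < \gamma_n$ are prescribed. Clearing denominators, equation \eqref{eqn:secular} is equivalent to the polynomial identity
\begin{equation}\label{eqn:polyid}
\sum_{i=1}^n \beta_i \prod_{j\neq i}(\gamma_j^2 - \mu) \;=\; C\prod_{k=1}^{n-1}(\lambda_k^2 - \mu),
\end{equation}
where $\beta_i = 1/b_i$ and $C$ is a normalization constant. Both sides have degree $n-1$ in $\mu$, so \eqref{eqn:polyid} imposes $n$ linear conditions on the $n$ unknowns $\beta_1,\ldots,\beta_n$ (plus the scale $C$). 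Specializing $\mu = \gamma_i^2$ isolates the $i$-th term and yields
\begin{equation}
\beta_i \;=\; C\cdot \frac{\prod_{k=1}^{n-1}(\lambda_k^2 - \gamma_i^2)}{\prod_{j\neq i}(\gamma_j^2 - \gamma_i^2)}.
\end{equation}
The sign of the denominator is $(-1)^{i-1}$ (there are $i-1$ indices $j<i$), and by the strict interlacing hypothesis the numerator has sign $(-1)^{i-1}$ as well. Hence the ratio is positive for every $i$, so choosing any $C>0$ produces $\beta_i > 0$, and therefore $b_i := 1/\beta_i > 0$ together with $a_i := \gamma_i^2 b_i > 0$. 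By construction the secular polynomial \eqref{eqn:secular} of this data has precisely the prescribed roots $\lambda_1^2,\ldots,\lambda_{n-1}^2$.

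\textbf{Main obstacle.} The analytical content is essentially Cauchy's interlacing theorem, so the forward direction is not where difficulty lies; the slightly delicate point is the sign bookkeeping in Step 3, which one must verify carefully to see that the strict interlacing hypothesis is exactly what makes the numerator and denominator share the same sign for every $i$. Everything else is routine manipulation.
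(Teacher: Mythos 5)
Your proof is correct, but it follows a genuinely different route from the paper. The paper changes coordinates to $y_i=\sqrt{a_i}\,x_i$, rotates the constraint direction onto the last axis, and identifies the reduced system with $z+A\ddot z=0$, where $A$ is the top-left $(n-1)\times(n-1)$ principal minor of the symmetric matrix $O\Gamma O^{-1}$ with spectrum $\{1/\gamma_i^2\}$; the interlacing (and its converse) is then obtained by citing the Cauchy/Gelfand--Cetlin interlacing inequalities between a symmetric matrix and its principal minor, with no explicit construction given for the converse. You instead derive the secular equation $\sum_i \frac{1/b_i}{\gamma_i^2-\mu}=0$ directly from the Lagrange-multiplier form of the equations of motion and prove interlacing by elementary pole-and-monotonicity analysis of this rational function --- which is in effect a self-contained proof of the interlacing theorem the paper only quotes --- and you prove the converse constructively by partial fractions, with the sign bookkeeping $(-1)^{i-1}$ in both numerator and denominator correctly verified so that $\beta_i>0$. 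What your approach buys is self-containedness, an explicit formula for the normal modes $c_i=\mu_0/(b_i(\gamma_i^2-\omega^2))$, and an actual recipe for the inverse problem ($b_i=1/\beta_i$, $a_i=\gamma_i^2 b_i$); what the paper's approach buys is brevity and a direct link to the Gelfand--Cetlin circle of ideas. Two small points worth making explicit in your write-up: (i) the identity \eqref{eqn:polyid} holds as a polynomial identity because both sides have degree at most $n-1$ and agree at the $n$ distinct points $\mu=\gamma_i^2$ by your choice of $\beta_i$; and (ii) in the strict case the discarded branch $\mu_0=0$ genuinely contributes no modes, since it would force a single nonzero $c_i$ violating $\sum c_i=0$, whereas for repeated $\gamma_i$ it is exactly this branch that produces the internal sector modes --- your continuity argument covers this, but it is worth noting.
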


\begin{proof}
With a linear change of the coordinates $y_i = \sqrt{a_i}x_i$, 
we can write
\begin{equation}
E = \dfrac{1}{2} \sum y_i^2 +  \dfrac{1}{2} \sum \dfrac{\dot{y}_i^2}{\gamma_i^2}
\end{equation}
with the constraint $\sum \alpha_i y_i = 0$, where $\alpha_i = 1/ \sqrt{a_i}$. 
The corresponding unconstrained equation is:
\begin{equation}
y + \Gamma.\ddot{y} = 0,
\end{equation}
where  
$\Gamma = \text{diag} (1/\gamma_i^2)$ is the diagonal matrix whose entries are
$1/\gamma_i^2$. The true equation of motion, taking into account the constraint, is:
\begin{equation}
y + \Gamma. \ddot{y} \in \mathbb{R}. (\alpha_1,\hdots,\alpha_n)^T,
\end{equation}
where $y = (y_1,\hdots,y_n)^T$.
(The above equation is written in the form of an inclusion,
which means that $y + \ddot{y}$ is collinear to
$(\alpha_1,\hdots,\alpha_n)^T$; $T$ means the transpose).

Let $O$ be an orthogonal matrix such that
\begin{equation}
O. (\alpha_1,\hdots,\alpha_{n-1},\alpha_n)^T = (0,\hdots,0, \norm\alpha )^T,
\end{equation}
where $\norm\alpha = \sqrt{\sum \alpha_i^2}$.  
Then above equation is equivalent to:
\begin{equation}
Oy + O\Gamma \ddot{y} \in \mathbb{R}. (0,\hdots,0, 1)^T.
\end{equation}
Denote
\begin{equation}
z = (z_1,\hdots, z_n) = Oy.
\end{equation}
Then the constraint $\langle y, \alpha \rangle = 0$ is equivalent to 
$\langle z, O\alpha \rangle = \langle Oy, O\alpha \rangle = 
\langle y, \alpha \rangle = 0,$  i.e. $z_n = 0$, i.e. we can forget about $z_n$ 
and look only at the 
variables $z_1, \hdots, z_{n-1}$.

Denote by $A$ the left-top $(n-1) \times (n-1)$ minor of the positive symmetric matrix
$O\Gamma O^{-1}$, then the system is equivalent to
\begin{equation}
z + A \ddot{z} = 0.
\end{equation}

It is well-known that the eigenvalues of a symmetric matrix
$O\Gamma O^{-1}$ (which are $\dfrac{1}{\gamma_1^2},\hdots, \dfrac{1}{\gamma_{n}^2}$)
and the eigenvalues of its left-top minor $A$ 
(which are $\dfrac{1}{\lambda_1^2},\hdots, \dfrac{1}{\lambda_{n-1}^2}$)
satisfy Inequality \eqref{eqn:GC_Inequalities}, which is part of the so-called 
\textit{Gelfand-Ceitlin triangle of inequalities}, see, e.g., \cite{GS_GC1983}.
\end{proof}

\begin{remark}{\rm
 The component energy functions 
 $E_i =  \dfrac{1}{2}  a_i x_i^2 +  \dfrac{1}{2}  \dot{x}_i^2$
are not first
integrals of the constrained n-oscillator model, i.e. they also change with time
(in a quasi-periodic way). Thus we see a speculation \textit{energy transfer among the
components} of the market in this model.
}
\end{remark}

\subsection{Market sectors: Components having the same proper frequencies} \hfill

In the generic case, the frequencies $\lambda_1,\hdots,\lambda_{n-1}$ 
of the linear unconstrained $n$-oscillator are incommensurable, 
and the regular minimal invariant tori of the system in the phase space are 
of full dimension $n-1$. However, there are some special
cases when the minimal invariant tori are of dimension less than $n-1$. 
One particular case is when there are some components whose proper frequencies 
are the same.

Assume, for example:
\begin{equation}
\gamma_{p+1} = \gamma_{p+2} = \hdots = \gamma_{p+k}, \quad\ p \geq 0, k \geq 2.
\end{equation}
Then, according to Inequality \eqref{eqn:GC_Inequalities}, we also have:
\begin{equation}
\gamma_{p+1} = \lambda_{p+1} = \gamma_{p+2} = \hdots 
= \lambda_{p+k-1} = \gamma_{p+k}, 
\end{equation}
i.e., the multiplicity of the frequency $\lambda_{p+1}$ in the linear 
unconstrained $n$-oscillator
is (at least) $k -1$, and this frequency coincides with the proper frequency of $k$
components of the system.

By putting $\hat{x}_1 = x_1,\hdots,\hat{x}_p = x_p,$
$\hat{x}_{p+2} = x_{p+k+1},\hdots,\hat{x}_{n-k+1} = x_n$, and
\begin{equation}
\hat{x}_{p+1} = x_{p+1} + \hdots + x_{p+k}, 
\end{equation}
we can make a reduction of the system in this
case, reducing the number of components from $n$ to $n - k + 1$ (and killing the
frequency $\lambda_{p+1}$ along the way, 
by ``averaging out'' with respect to that frequency).
This procedure corresponds to the practice of regrouping many similar components
into a sector in the market.

The coefficients associated to the sector $(x_{p+1},\dots,x_{p+k})$ are:
\begin{equation}
\hat{a}_{p+1} = \dfrac{1}{\sum_{i=1}^k \frac{1}{a_{p+i}}}, \; 
\hat{b}_{p+1} = \gamma_{p+1}^2 \hat{a}_{p+1}.
\end{equation}
(The coefficients for the other components remain the same: $\hat{a}_i = a_i$
and $\hat{b}_i = b_i$ for $i \neq p + 1,\dots,p + k)$:
The speculation energy of the sector
\begin{equation}
E_\text{sector} = \sum_{i=1}^k E_{p+i} = 
\sum_{i=1}^k \dfrac{1}{2} (a_{p+i} x_{p+i}^2 + b_{p+i} \dot{x}_{p+i}^2)
\end{equation}
is decomposed into the sum of 2 parts: the external energy (vis a vis the market)
and the internal speculation energy (which accounts for the internal movements
in the sector):
\begin{equation}
\hat{E}_{p+1} = E_\text{external}  
= \dfrac{1}{2} (\hat{a}_{p+1} \hat{x}_{p+1}^2 + \hat{b}_{p+1} \dot{\hat{x}}_{p+1}^2)
= \dfrac{1}{2} \dfrac{ (\sum_{i=1}^k \hat{x}_{p+i})^2}{\sum_{i=1}^k \frac{1}{a_{p+i}}}
+ \dfrac{1}{2} \dfrac{ \gamma_{p+1}^2(\sum_{i=1}^k \dot{\hat{x}}_{p+i})^2}{\sum_{i=1}^k \frac{1}{a_{p+i}}}
\end{equation}
and
\begin{multline}
E_\text{internal} = E_\text{sector} - E_\text{external} \\
= \dfrac{1}{2}\left[  \sum_{i=1}^k \dfrac{1}{2} a_{p+i} x_{p+i}^2 
- \dfrac{1}{2} \dfrac{ (\sum_{i=1}^k \hat{x}_{p+i})^2}{\sum_{i=1}^k \frac{1}{a_{p+i}}} \right] 
+ \dfrac{\gamma_{p+1}^2)}{2}
\left[  \sum_{i=1}^k \dfrac{1}{2} b_{p+i} \dot{x}_{p+i}^2 
- \dfrac{1}{2} \dfrac{ (\sum_{i=1}^k \dot{\hat{x}}_{p+i})^2}{\sum_{i=1}^k \frac{1}{a_{p+i}}} \right] 
\end{multline}
(The energy of the other components remains the same).

Remark the natural fact that $E_\text{internal} \geq 0$, 
and this inequality can be seen as a particular case of the Cauchy-Schwartz inequality
\begin{equation}
\left( \sum a_{p+i} x_{p+i}^2 \right) \left( \sum \frac{1}{a_{p+i}} \right)
\geq \left( \sum \sqrt{a_{p+i} x_{p+i}^2} \sqrt{\frac{1}{a_{p+i}}} \right)^2
= \left( \sum x_{p+i} \right)^2 .
\end{equation}

The internal movement (among the components of the sector, but does not affect
the total sector mispricing $z_{p+1} = \sum^k_{i=1} x_{p+i})$ 
is governed by the internal energy
function $E_\text{internal}$. This movement is periodic of period 
$\dfrac{2\pi}{\lambda_{p+1}}$ (frequency $= \lambda_{p+1}$)
and is isomorphic to a synchronous $(k-1)$-dimensional harmonic oscillator (i.e.
Hamiltonian system with Hamiltonian function 
$\displaystyle  h = \dfrac{\lambda_{p+1}}{2} \sum_1^{k-1} (p^2_i + q^2_i))$ 
on the symplectic space 
$(R^{2(k-1)}, \omega = \sum^{k-1}_{i=1} dp_i \wedge dq_i)$). 
This internal movement commutes
with the external movement of the market, which now has $n-k+1$ components
$z_1,\dots,z_{n-k+1}$ instead of $n$ components and the new speculation energy function
\begin{equation}
\hat{E} =
\sum^{n-k+1}_{j=1}
\hat{E}_j = E_1 + \dots + E_p + E_\text{external} + E_{p+k+1} + \dots + E_n. 
\end{equation}

We can reduce the system, from a constrained $n$-oscillator to a constrained 
$(n-k + 1)$-oscillator, by forgetting about the internal movement in the 
sector consisting of $k$ components $x_{p+1},\hdots,x_{p+k}$ 
and considering the whole sector as just
one component $\hat{x}_{p+1} = x_{p+1} + \dots + x_{p+k}$.

\subsection{The stochastic model} \hfill

Our stochastic constrained $n$-oscillator model of the market will be a perturbation
of the deterministic linear constrained $n$-oscillator model, which is a proper
integrable Hamiltonian systems with $n-1$ degrees of freedom. Under a nonlinear
perturbation, an integrable systems is no longer integrable in general and may exhibit
chaotic behavior. Nevertheless, the KAM (Kolmogorov-Arnold-Moser) theory
with Nekhoroshev's exponential time stability theory 
(see, e.g.,  \cite{Meyers_Dynamics2011}) 
say that if the perturbation is deterministic and small and the system is 
non-resonant, then
the most solutions of the perturbed system are still quasi-periodic, 
at least for a very long period of time. 

When stochastic terms are added, the situation
becomes more complicated. There are elements of KAM theory in the stochastic
case (for example, the averaging method with respect to a torus action, see, e.g.,
\cite{FrWe-Random2012,SK_StochasticAveraging2015} and references therein), 
but as far as we know, a full KAM theory for SDS does not exist yet. 
Nevertheless, we will assume that most solutions of a reasonable 
stochastic perturbation of an integrable Hamiltonian system will look 
similar to solutions of an integrable stochastic  dynamical systems, at least for a
very long period of time. For practical purposes, here we will be interested only
in such solutions. So we will look only at stochastic models which are integrable
in the sense of \cite{ZungThien_Stochastic2015}, or even more restrictively, which 
are invariant with respect to a torus action of half the dimension of the phase space, 
similarly to classical integrable Hamiltonian systems and their Liouville 
torus actions (see \cite{Zung-AA2017}).

In the deterministic linear constrained $n$-oscillator model, 
the general solution has
the form:
\begin{equation}
x_i(t) = \sum^{n-1}_{j=1} c_{ij}z_j(t); \quad i = 1,\hdots,n,
\end{equation}
with
\begin{equation}
z_j(t) = r_j \sin(\lambda_j t + \theta_j); \quad j = 1,\hdots,n-1,
\end{equation}
where $(c_{ij})$ is a constant matrix of linear transformation, 
$r_j > 0$ ($j = 1,\hdots,n-1$)
are action coordinates which do not depend on time 
(they are first integrals of the
system), and $\lambda_j t + \theta_j$ are angle coordinates which more at constant 
frequencies $\lambda_j$ .
The numbers $(r_j,\theta_j)$ are initial data in the action-angle coordinate system.

In our simplest stochastic model, we will use the same linear transformation 
matrix $(c_{ij})$ to write
$x_i(t) = \sum^{k-1}_{j=1} c_{ij}z_j(t)$ for $i = 1,\hdots,n$, 
and will assume that each $z_j$ behaves like a
damped stochastic oscillator. The general solution for $z_j$ has the form
\begin{equation}
z_j(t) = r_j(t) \sin(\lambda_j t + \theta_j + S_j(t)),
\end{equation}
where $r_j(t)$ is no longer constant in $t$ but satisfies a stochastic 
differential equation of the form
\begin{equation}
dr_j(t) = \left(  \dfrac{1}{r_j(t)} - f(r_j(t))\right) dt + dB^j_t 
\end{equation}
(like the one obtained for the 1-degree-of-freedom damped stochastic oscillator
in polar action-angle coordinates, see \cite{ZungThien_Stochastic2015}) 
where $S_j(t)$ is a martingale Ito process whose volatility is inverse 
proportional to $r_j(t)$: 
$dS_j(t) =\dfrac{1}{r_j(t)}dW^j_t$. (Here $B^j_t$ and $W^j_t$
are independent Wiener processes). 

We will call the process satisfied by each
$r_j$ a \textbf{\textit{positive bell-shaped process}}, 
in view of the shape of its stationary density
function.

In summary, our stochastic model is as follows:
\begin{equation}
x_i(t) = \sum^{n-1}_{j=1} 
c_{ij}r_j(t) \sin(\lambda_j t + \theta_j + S_j(t)), \quad\ i = 1,\hdots,n,
\end{equation}
where:
\begin{itemize}
\item $x_i(t)$ is the mispricing of $i$-th component at time $t$,
\item $(c_{ij})$ is a constant matrix of linear transformation,
\item $r_j(t)$ are independent positive bell-shaped processes $(j = 1,\hdots,n)$,
\item $\lambda_j > 0$ are frequencies,
\item $\theta_j$ are initial angular values,
\item $S_j(t)$ are independent martingale Ito processes whose 
volatilities are $\dfrac{1}{r_j(t)}$ respectively,
i.e., $\displaystyle dS_j(t) = \dfrac{1}{r_j(t)} dW^j_t \quad
(j = 1,\hdots, n-1).$
\end{itemize}

This model has the following features, which are compatible with observations 
in the real-world financial markets: 

- The whole system is integrable quasiperiodic in stochastic sense, and goes
through boom-bust cycles. 

- Every component (asset) has the same set of periods, but different periods have
different relative importances (the coefficients $c_{ij}$) for different assets.

- At any given time, different momenta corresponding to 
different periods may be of the same sign, or they may be of opposite signs 
(i.e. they are counter-trend to each other) resulting in a complicated zig-zag
movement of the price (even before the noises).

- The assets can be regrouped into sectors according to their proper frequencies.
Each sector has its external motion (vis a vis the whole market) and internal
motion (change of relative weights of the stocks in the sector - this internal
change is periodic in the stochastic sense and has its own period).

\section*{Acknowledgement}

This paper was written during  my stay at the School of 
Mathematical Sciences, Shanghai Jiao Tong University, as a visiting 
professor. I would like to thank Shanghai Jiao Tong University, the 
colleagues at the School of Mathematics of this university, and 
especially Tudor Ratiu, Xiang Zhang, Jianshu Li, 
and Jie Hu for the invitation, hospitality and excellent working 
conditions. 

I told some of the ideas of this paper to my former student N.T. Thien,
who included them in his thesis in 2014. We were supposed to develop them
together, but he found a job in a bank right after his thesis and didn't
have time to do more research. Nevertheless, we wrote a paper together 
\cite{ZungThien_Stochastic2015} which provides some mathematical 
background for this paper.

\end{document}